\documentclass[fleqn,10pt]{wlscirep}
\usepackage[utf8]{inputenc}
\usepackage[T1]{fontenc}

\usepackage{blkarray}
\usepackage{amsthm}
\newtheorem{theorem}{Theorem}

\usepackage{subfigure}
\usepackage{multirow}
\usepackage{makecell}
\title{Equivalence between Time Series Predictability and Bayes Error Rate}

\author[a,1]{En Xu}
\author[b,1,2]{Tao Zhou}
\author[a,2]{Zhiwen Yu} 
\author[a]{Zhuo Sun}
\author[a]{Bin Guo}

\affil[a]{School of computer science, Northwestern Polytechnical University, Xi'an 710129, China}
\affil[b]{Big Data Research Center, University of Electronic Science and Technology of China, Chengdu 611731, China}


\begin{abstract}
	Predictability is an emerging metric that quantifies the highest possible prediction accuracy for a given time series, being widely utilized in assessing known prediction algorithms and characterizing intrinsic regularities in human behaviors. Lately, increasing criticisms aim at the inaccuracy of the estimated predictability, caused by the original entropy-based method. In this brief report, we strictly prove that the time series predictability is equivalent to a seemingly unrelated metric called Bayes error rate that explores the lowest error rate unavoidable in classification. This proof bridges two independently developed fields, and thus each can immediately benefit from the other. For example, based on three theoretical models with known and controllable upper bounds of prediction accuracy, we show that the estimation based on Bayes error rate can largely solve the inaccuracy problem of predictability.
\end{abstract}
\begin{document}

\flushbottom
\maketitle
%
%
\thispagestyle{empty}

Predictability refers to the limit of prediction accuracy of a given time series \cite{song2010limits}. Exploring such a metric is of great significance. On the one hand, with given data, it can be used to evaluate the performance and to estimate the potential improving space of existing algorithms \cite{lu2013approaching}. On the other hand, it characterizes intrinsic regularities of time series and thus the investigation can deepen our understanding of related phenomena. For example, the varying predictability reveals the sudden change of human mobility patterns after disasters \cite{lu2012predictability}, quantifies the respective contributions of behavioral similarities and social relationships to human mobility prediction \cite{Chen2022}, and uncovers the roles of model structure and social network heterogeneity in predicting infectious disease outbreaks \cite{Scarpino2019}.

Song \textit{et al.} proposed an entropic framework to calculate the predictability $\Pi$ \cite{song2010limits} (see also an analogous contribution by Feder and Merhav \cite{feder1994relations}). This pioneering method has two obvious disadvantages: (i) it builds on the underlying Markovian nature while some time series exhibit long-range correlations; (ii) it is sensitive to the lengths of time series while real-time series are usually too short to satisfy the requirement. Consequently, the above method is usually inaccuracy, sometimes largely overestimated \cite{xu2019predictability,smith2014refined} and sometimes surpassed by well-designed algorithms \cite{lu2013approaching,kulkarni2019examining}.

Bayes error rate (BER, denoted by $R$) is the lowest unavoidable error rate in classification for given data \cite{cover1967nearest}, which has been extensively applied in feature selection, intermediate representations of features or behaviors, quality assessment of security defenses, feasibility estimation of machine learning, and so on \cite{berisha2015empirically}. In contrast to the onset of germination of predictability, the investigation on BER has lasted more than half a century and thus many sophisticated methods are proposed to calculate $R$ or to estimate the upper and lower bounds of $R$, such as the \textit{density estimators} that directly estimate the conditional probability distribution $\eta_{\omega}(x)=p(\omega|x)$ where $\omega$ and $x$ are class label and feature vector \cite{ferguson1983bayesian}, the \textit{divergence estimators} that focus on the maximum posterior probability $\max \limits_{\omega \in \Omega} \eta_{\omega}(x)$ where $\Omega$ is the set of classes \cite{nguyen2010estimating}, and the \textit{k-NN estimators} that learn the expectation of the highest accuracy $\mathbb{E}_{X}[1-\max \limits_{\omega \in \Omega} \eta_{\omega}(x)]$, where $X$ denotes the feature space \cite{fukunaga1975k}. Later in this brief report, we will prove the mathematical equivalence between the above two seemingly unrelated metrics, predictability and BER, and further discuss what can be immediately gained after the equivalence.

\begin{table}
	\centering
	\caption{Illustration of the correspondence between prediction and classification for an example series ABCBA. } 
	\scalebox{0.85}{
		\begin{tabular}{cllll} 
			\toprule
			\multirow{2}{*}{\textbf{Time Series}} & \multicolumn{2}{c}{\textbf{Prediction}}&\multicolumn{2}{c}{\textbf{Classification}}   \\
			\cline{2-5} 
			& \textbf{Historical States} & \textbf{Predicted State} &\textbf{Features} & \textbf{Class}  \\
			\midrule
			\multirow{5}{*}{ABCBA} 
			& 1.  $\varnothing$& 1.  A	& 1.  $\varnothing$& 1.  A	\\
			& 2.  A		& 2.  B	& 2.  A		& 2.  B			\\
			& 3.  AB	& 3.  C	& 3.  AB	& 3.  C					\\
			& 4.  ABC	& 4.  B	& 4.  ABC	& 4.  B					\\
			& 5.  ABCB  & 5.  A     & 5.  ABCB     &5.  A			\\
			\bottomrule
	\end{tabular}}
	\label{tbl:1}
\end{table}

\section*{Theorem}

\begin{theorem}
	\label{theorem:1}
	Given a M-state time series, its predictability $\Pi$ is equivalent to the Bayes error rate $R$ in a M-classification problem as
	\begin{equation}
		\Pi = 1 - R, \label{eqn:1}
	\end{equation}	
	if we treat each state as a class, and the series before the state as the feature. 
\end{theorem}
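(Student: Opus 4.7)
The plan is to reduce both $\Pi$ and $R$ to the same pointwise-optimal expectation $\mathbb{E}_X\!\left[\max_{\omega \in \Omega}\eta_\omega(x)\right]$ via a standard Bayes-decision argument, and then to add the two expressions to obtain unity.

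First, I would fix a common probability space consistent with the correspondence illustrated in Table~\ref{tbl:1}. At each time index $t$, the history of the series up to $t$ is identified with a feature vector $x_t \in X$ and the symbol at $t$ with a class label $\omega_t \in \Omega$; the joint distribution of $(X,\Omega)$ is the time-averaged (stationary or empirical) law over such (history, next-symbol) pairs, so that the conditional occurrence frequencies used implicitly in Song's predictability coincide with the quantities $\eta_\omega(x) = p(\omega \mid x)$ appearing in the definition of $R$.

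Second, I would invoke the operational characterization on each side. Any prediction algorithm is a function $f : X \to \Omega$, and its expected accuracy on the series is $\mathbb{E}_X\!\left[\eta_{f(X)}(X)\right]$; since the pointwise maximizer is also the maximizer in expectation, the optimal $f^\ast(x) = \arg\max_{\omega\in\Omega}\eta_\omega(x)$ yields $\Pi = \mathbb{E}_X\!\left[\max_{\omega\in\Omega}\eta_\omega(x)\right]$. The same argument, which is the textbook derivation of BER, gives $R = \mathbb{E}_X\!\left[1 - \max_{\omega\in\Omega}\eta_\omega(x)\right]$. Summing the two identities yields $\Pi + R = 1$, i.e.\ equation~\eqref{eqn:1}.

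The main obstacle lies not in the two-line calculation above but in justifying the identification made in the first paragraph. One has to argue carefully that Song's $\Pi$, though originally upper-bounded via entropy and Fano's inequality, is genuinely \emph{defined} as the supremum of expected accuracy over all predictors, so that the entropy bound is downstream rather than definitional; and that the distribution used in this supremum is the same (history, next-symbol) law against which $R$ is computed. Once this alignment is made explicit, and the minor measurability issues arising from histories of unbounded length are dispatched (routine, since for a finite alphabet only countably many histories appear with positive probability), the equivalence collapses to the classical Bayes-classifier optimality statement.
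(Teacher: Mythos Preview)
Your proposal is correct and follows essentially the same route as the paper: both arguments identify $\Pi$ and $1-R$ with the common quantity $\mathbb{E}_X\!\left[\max_{\omega\in\Omega} p(\omega\mid x)\right]$ once the (history, next-symbol) law is fixed as the feature--label distribution. The paper merely unpacks the Bayes side more explicitly---writing $R$ via the decision regions $\Gamma_j$, applying Bayes' formula to pass from $p(\omega_j)p(x\mid\omega_j)$ to $p(x)p(\omega_j\mid x)$, and spelling out the time-averaging identification $p(x)=\tfrac{1}{n}P(x_{i-1})$---whereas you compress these into the single expectation and (rightly) flag the probability-space alignment as the only substantive step.
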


\begin{proof}
	Denote $x_{n-1}=\omega^{1}\omega^{2}\cdots\omega^{n-1}$ the historical series from time $1$ to $n-1$, where $\omega^{i}\in \Omega$ and $\Omega$ is the set of $M$ states. Denote $Pr[\omega^{n}=\hat{\omega}^{n}|x_{n-1}]$ the probability that our predicted state $\hat{\omega}^{n}$ is equal to the actual state $\omega^{n}$ given $x_{n-1}$, and $\pi(x_{n-1})=\sup_{\omega}\left\{Pr[\omega^{n}=\omega|x_{n-1}] \right\}$ the probability of occurrence of the most probable state at time $n$, then the predictability of the $n$th state given the historical states $x_{n-1}$ is $\Pi(n)=\sum_{x_{n-1}}P(x_{n-1})\pi(x_{n-1})$, where $P(x_{n-1})$ is the probability of observing a particular history $x_{n-1}$, and the sum is taken over all possible histories of length $n-1$. Notice that, $\pi(x_{n-1})$ contains the full predictive power including the potential long-range correlations in the time series, while in practice we usually use shorter historical series, such as  $\omega^{n-r}\omega^{n-r+1}\cdots\omega^{n-1}$ with $r$ a cutoff parameter, instead of the full historical series $\omega^{1}\omega^{2}\cdots\omega^{n-1}$, so that in a more general case, the probability of observing a particular history can be smaller than 1. The overall predictability $\Pi$ is then defined as the time averaged predictability for a sufficiently long time series \cite{song2010limits}, as 
	\begin{align}
		\Pi = \lim\limits_{n\to \infty}\frac{1}{n}\sum_{i=1}^{n}\Pi(i),  
	\end{align}
	where $x_0=\varnothing$ and $\Pi(1)$ is the predictability of the first state without any available information.

	Considering an $M$-classification problem with $n$ samples whose class labels are the $n$ states in the time series and whose features are series before the corresponding states. Table \ref{tbl:1} illustrates the one-to-one relationship between time series prediction and classification for an example series with $M=3$ and $n=5$. Denote $p\left(x | \omega_{j}\right)$ the conditional probability density of the feature $x \in X'$, $X'\subset X$ is the set of observed features, and $p\left(\omega_{j}\right)$ the prior probability of the class $\omega_{j}\in \Omega$ $(j = 1, 2, \cdots, M)$, the BER is expressed as:
	\begin{equation}
		R=1-\sum_{j=1}^{M} \int_{\Gamma_{j}} p\left(\omega_{j}\right) p\left(x | \omega_{j}\right) dx,
	\end{equation}
	with the partition $\Gamma_{j}$ defined as:
	\begin{equation}
		\Gamma_{j} \!\triangleq\! \left\{\! x\! \in\! X'\! \mid\! p\left(\omega_{j}\right) p\left(x | \omega_{j}\right)\!>\!\max _{\substack{k \neq j}}\left\{p\left(\omega_{k}\right) p\left(x | \omega_{k}\right)\right\}\! \right\}.
	\end{equation}
	
	Applying the Bayes formula $p(x)p\left(\omega_{j}|x\right) = p\left(\omega_j\right)p\left(x | \omega_{j}\right)$, where $p(x)$ is the prior probability of the feature $x$, we have
	\begin{equation}
		\label{eqn:5}
		\sum_{j=1}^{M} \int_{\Gamma_{j}} p\left(\omega_{j}\right) p\left(x | \omega_{j}\right) dx = \sum_{j=1}^{M} \int_{\Gamma_{j}^{'}} p\left(x\right) p\left(\omega_{j} | x\right) dx,
	\end{equation}
	and the partition $\Gamma_{j}$ is equivalent to the partition $\Gamma_{j}^{'}$ with
	\begin{equation}
		\Gamma^{'}_{j} \triangleq \left\{ x \in X' \mid p\left(\omega_{j}|x \right)>\max _{\substack{k \neq j}}\left\{p\left(\omega_{k}|x\right)\right\} \right\}.
	\end{equation}
	
	According to our setting, there is a one-to-one relationship between features and historical series, namely for each $x_{i-1}$ $(1 \leq i \leq n)$, there exist a certain $x \in X'$ s.t. $x_{i-1}=x$, and vice versa. As $p(x)$ is the probability of observing the feature $x$ in the feature set $X'$ while $P(x_{i-1})$ is the probability of observing the series $x_{i-1}$ in all historical series of length $i-1$, in the large limit of $n$, $p(x)=\frac{1}{n}P(x_{i-1})$. In addition, $\pi(x_{i-1})=p\left(\omega_{j}|x\in \Gamma^{'}_{j} \right)$ according to the definition of $\pi(\cdot)$. As a consequence,  
	\begin{equation}
			\sum\limits_{j=1}^{M} \int_{\Gamma^{'}_{j}} p(x) p\left(\omega_{j} |x\right) dx = \lim\limits_{n\to \infty}\frac{1}{n}\sum\limits_{i=1}^{n}\sum\limits_{x_{i-1}}P(x_{i-1})\pi(x_{i-1}).
	\end{equation}
	
	Therefore, the time series predictability is equivalent to the Bayes error rate, with the relationship $\Pi=1-R$. 
\end{proof}

\section*{Results}

According to Theorem \ref{theorem:1}, we can directly take advantage of methods developed to calculate $R$ in real datasets to improve the estimation of $\Pi$. Considering a simple example with three states $\Omega = \{A,B,C\}$, where the next state only depends on the current state, according to the following Markovian transfer matrix 
\begin{equation}
	\begin{blockarray}{cccc}
		&A&B&C\\
		\begin{block}{c[ccc]}
			A&	q&	\frac{2}{3}(1-q)&	\frac{1}{3}(1-q)\\
			B&	\frac{1}{3}(1-q)&	q&	\frac{2}{3}(1-q)\\
			C&	\frac{2}{3}(1-q)&	\frac{1}{3}(1-q)&	q\\
		\end{block}
	\end{blockarray}.
\end{equation}
Obviously, when $0.4 \le q \le 1$, the true predictability $T=q$. Time series with arbitrary length $n$ can be generated by Eq. 8. We set $r=1$ to extract the features. Take $\{ABBCA\cdots\}$ as an example, the corresponding (feature, class) set is $\{(A,B),(B,B),(B,C),(C,A),\cdots\}$.

According to the entropy-based method, the estimated predictability $\bar{\Pi}$ is determined by 
\begin{equation}
	\label{eqn:9}
	H = -\bar{\Pi}\log_2\bar{\Pi} - (1-\bar{\Pi})\log_2(1-\bar{\Pi}) + (1-\bar{\Pi})\log_2(M-1),
\end{equation}
where $M=3$ and $H$ is the entropy of the next-moment state that can be estimated by the data (see details in \cite{song2010limits}). In the corresponding classification problem, the lower and upper bounds of $R$ can be obtained by the inequality
\begin{equation}
	\label{eqn:10}
	\begin{gathered}
		\frac{M-1}{(M-2) M} \sum_{i=1}^{M}\left[1-p(\omega_i)\right] R_{i}^{M-1} \leq R^{M} \leq 
		\min_{\alpha \in\{0,1\}} \frac{1}{M-2 \alpha} \sum_{i=1}^{M}\left[1-p(\omega_i)\right] R_{i}^{M-1}+\frac{1-\alpha}{M-2 \alpha},
	\end{gathered}
\end{equation}
where $R^k$ is the BER for the $k$-classification subproblem and $R_{i}^{M-1}$ is the BER for the $(M-1)$-classification subproblem created by removing the $i$th class (see details in \cite{wisler2016empirically,renggli2021evaluating}). The upper and lower bounds of predictability can then be obtained through Eq. \ref{eqn:1} (Theorem \ref{theorem:1}), and the estimated predictability $\tilde{\Pi}$ is the average of the two bounds. 

Figure 1A shows how $\bar{\Pi}$ changes with increasing $n$ for three specific cases $q=0.4$, $q=0.6$ and $q=0.8$. The result confirms two above-mentioned disadvantages of the entropy-based method, namely $\bar{\Pi}$ is sensitive to the length $n$ and much larger than the true predictability $T=q$. To ensure the stability, we set $n=2^{15}$ and compare the entropy-based method (Eq. \ref{eqn:9}) and the BER-inspired method (Eq. \ref{eqn:10}). As shown in figure 1B, the latter remarkably and consistently outperforms the former.  

Considering a more complicated series generator with $M$ states $\Omega = \{S_1,S_2,\cdots,S_M\}$, where the next state $\omega^{t+1}$ is randomly drawn from the $M$ states with probability $1-q$, or determined by the two anterior states with probability $q$. In the latter case, if $\omega^{t-1}=S_i$ and $\omega^t=S_j$, then $\omega^{t+1}=S_k$, $k=i+j$ (if $k>M$, we set $k \leftarrow k-M$). Obviously, the true predictability is $T=q+(1-q)/M$. Figure 1C reports how $\bar{\Pi}$ changes with increasing $n$ for four specific cases $q=0.2$, $q=0.4$, $q=0.6$ and $q=0.8$, with $M=100$ fixed. As $1/M$ is much smaller than $q$ in the above four cases, $T \approx q$.   Analogous to what found in figure 1A, $\bar{\Pi}$ is sensitive to $n$ and much larger than $T$ after being nearly stable ($n>2^{15}$). As shown in figure 1D, in most cases the BER-inspired method performs better than the entropy-based method, and only when the time series is highly predictable ($q\approx 1$, see the top right corner), the results of the entropy-based method and BER-inspired method are close to each other.

To reveal the effects of parameters $r$ and $M$, we further consider the third generator where the next state $\omega^{t+1}$ is equal to $\omega^{t}$ with probability $q_{1}=0.1$, equal to $\omega^{t-1}$ with probability $q_{2}=0.2$, equal to $\omega^{t-2}$ with probability $q_{3}=0.3$. With probability $1-q_1-q_2-q_3$, $\omega^{t+1}$ is randomly drawn from $M$ states. The true predictability is $T=\max\{q_1,\cdots,q_r\}+(1-q_1-q_2-q_3)/M$, sensitive to $r$ and $M$. As shown in figure 1E, the original entropy-based method does not consider the impacts of parameter $r$ while the BER-inspired method can well capture the effects of the memory length $r$. As shown in figure 1F, both the entropy-based and BER-inspired methods capture the decreasing tendency of predictability as the increase of $M$. One can clearly observed from figures 1E and 1F that the entropy-based method will largely overestimate the predictability even for sufficiently long time series, while the BER-inspired method performs much better.

\begin{figure*}[t]
	\centering
	\subfigure{
		\label{fig:side:a}
		\includegraphics[width=2in]{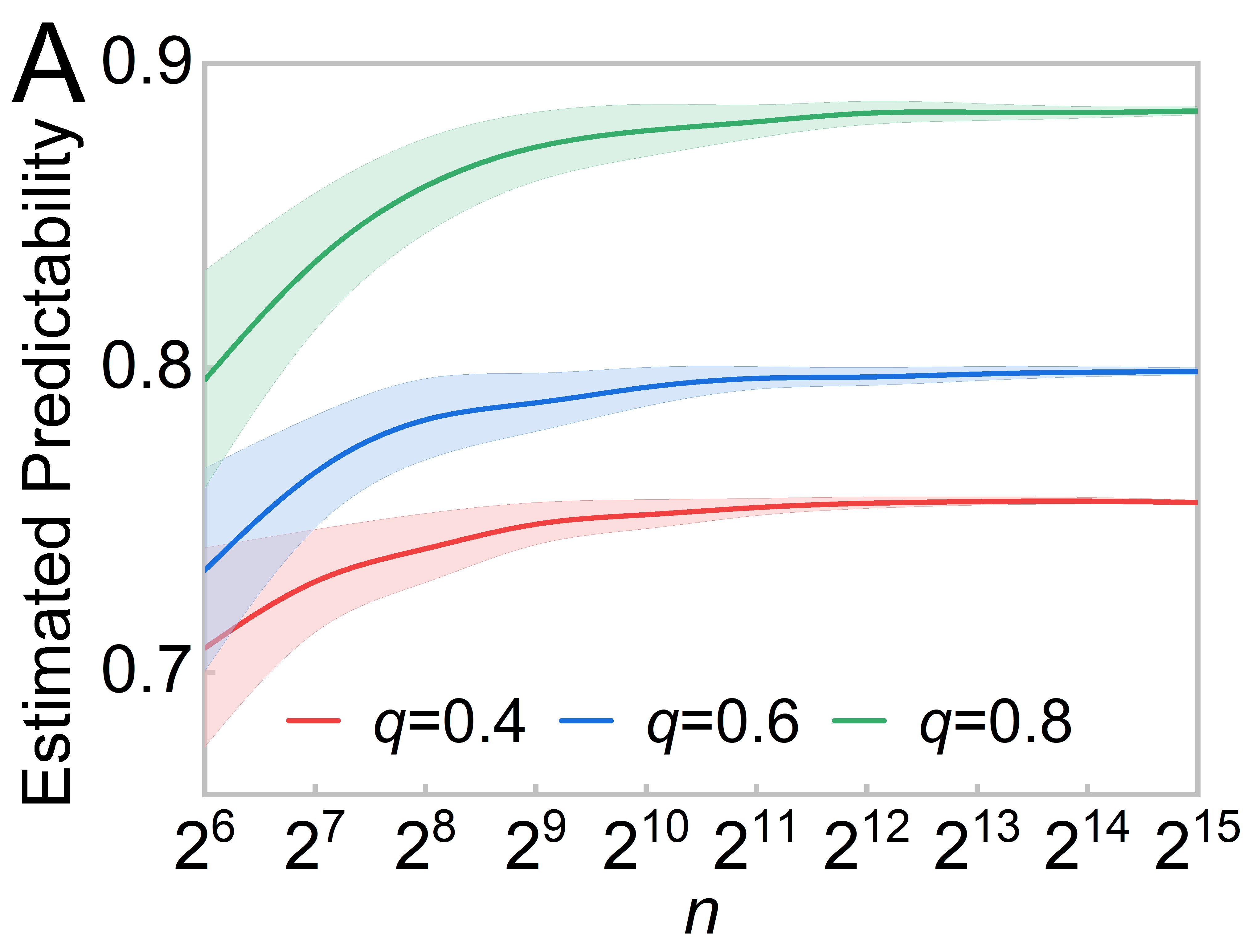}
	}
	\subfigure{
		\label{fig:side:b}
		\includegraphics[width=2in]{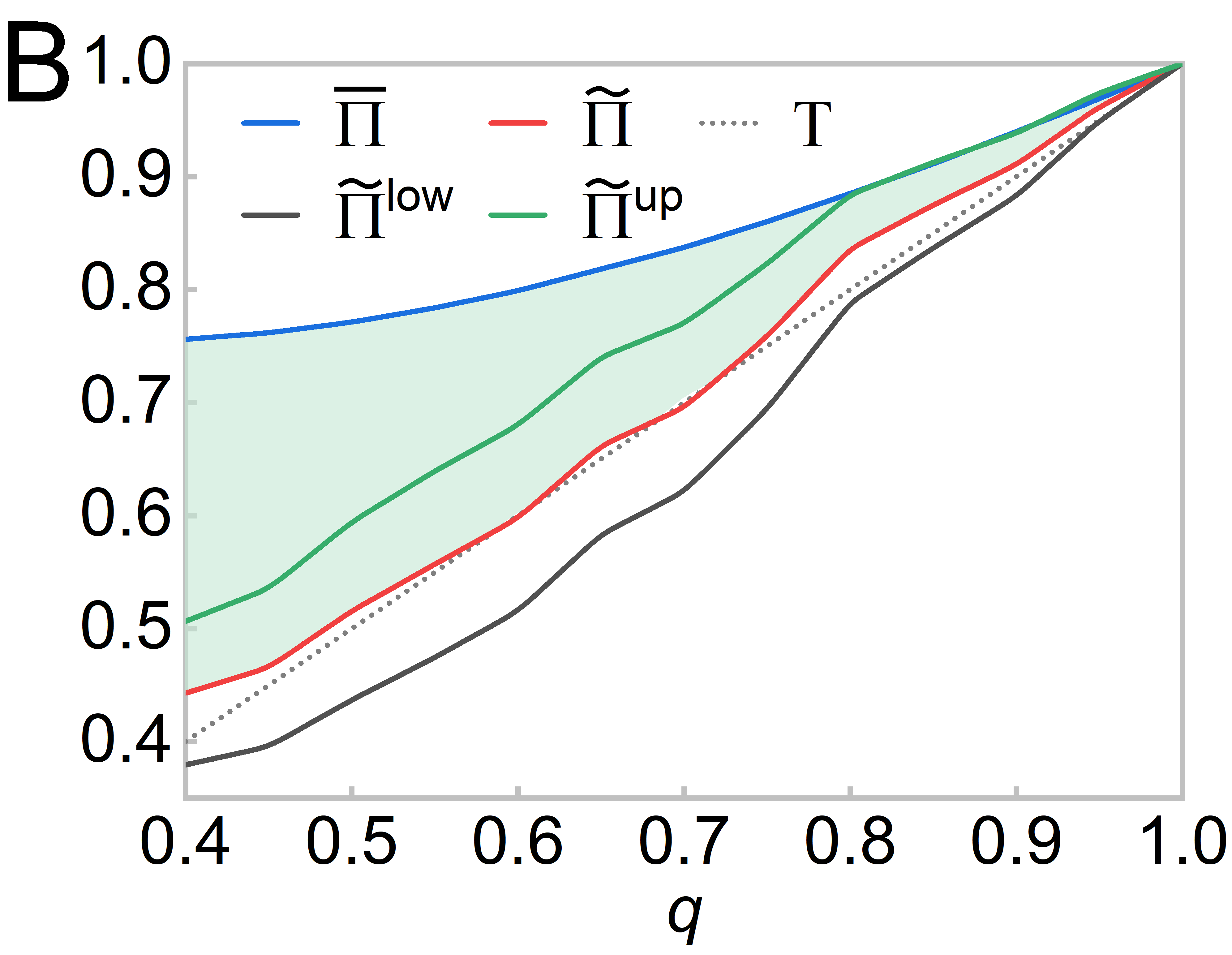}
	}
	\subfigure{
		\label{fig:side:c}
		\includegraphics[width=2in]{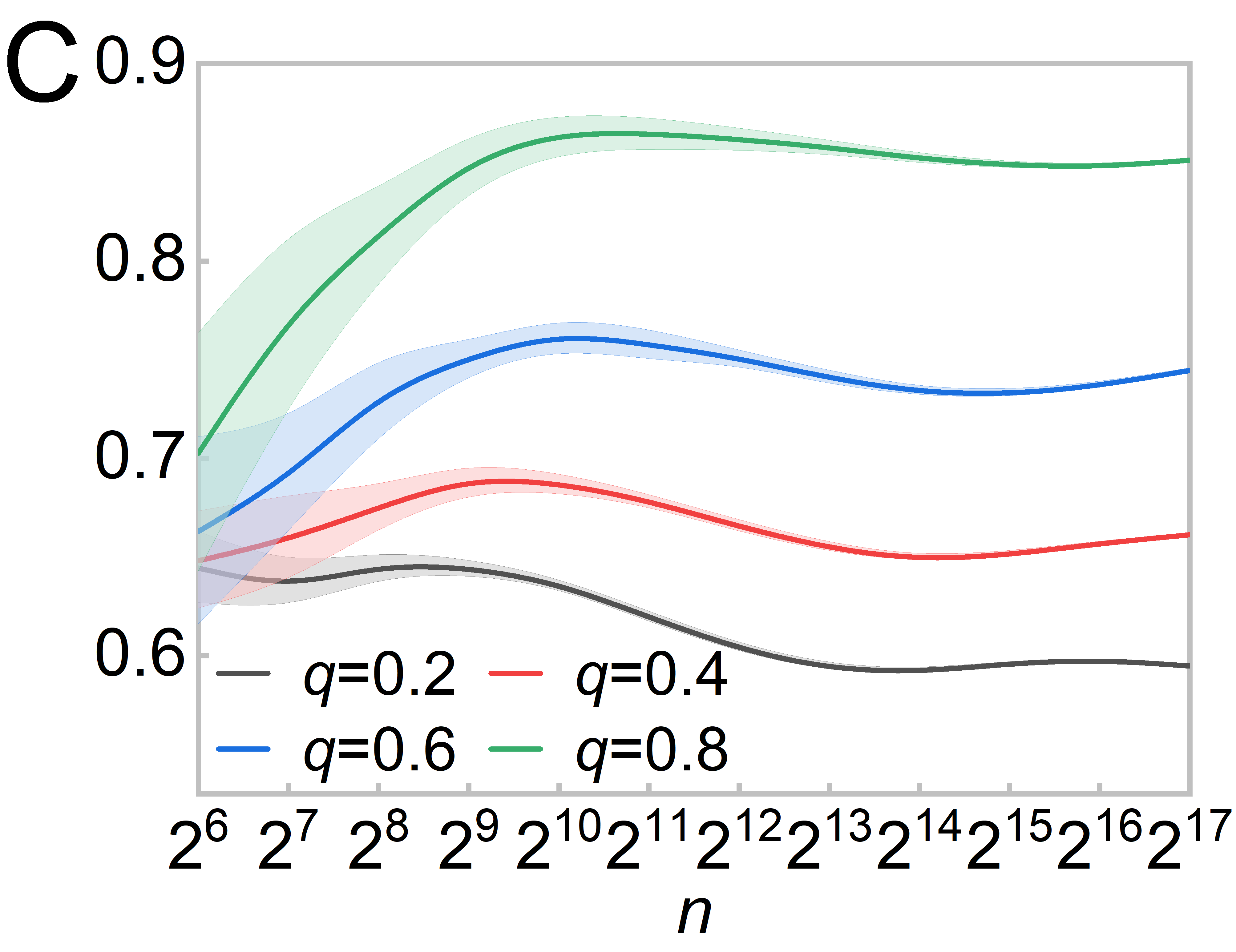}
	}
	\subfigure{
		\label{fig:side:d}
		\includegraphics[width=2.08in]{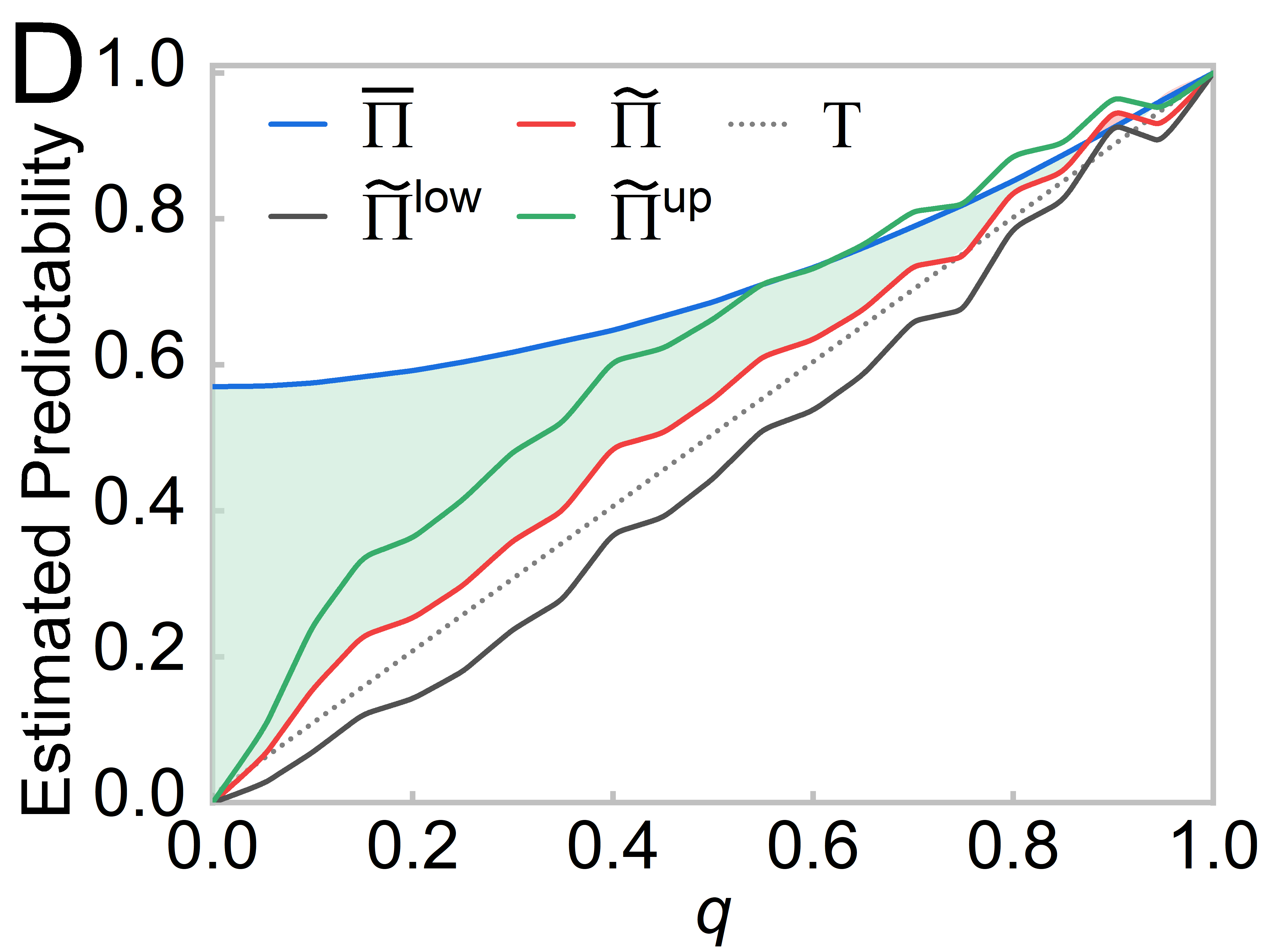}
	}
	\subfigure{
		\label{fig:side:e}
		\includegraphics[width=2in]{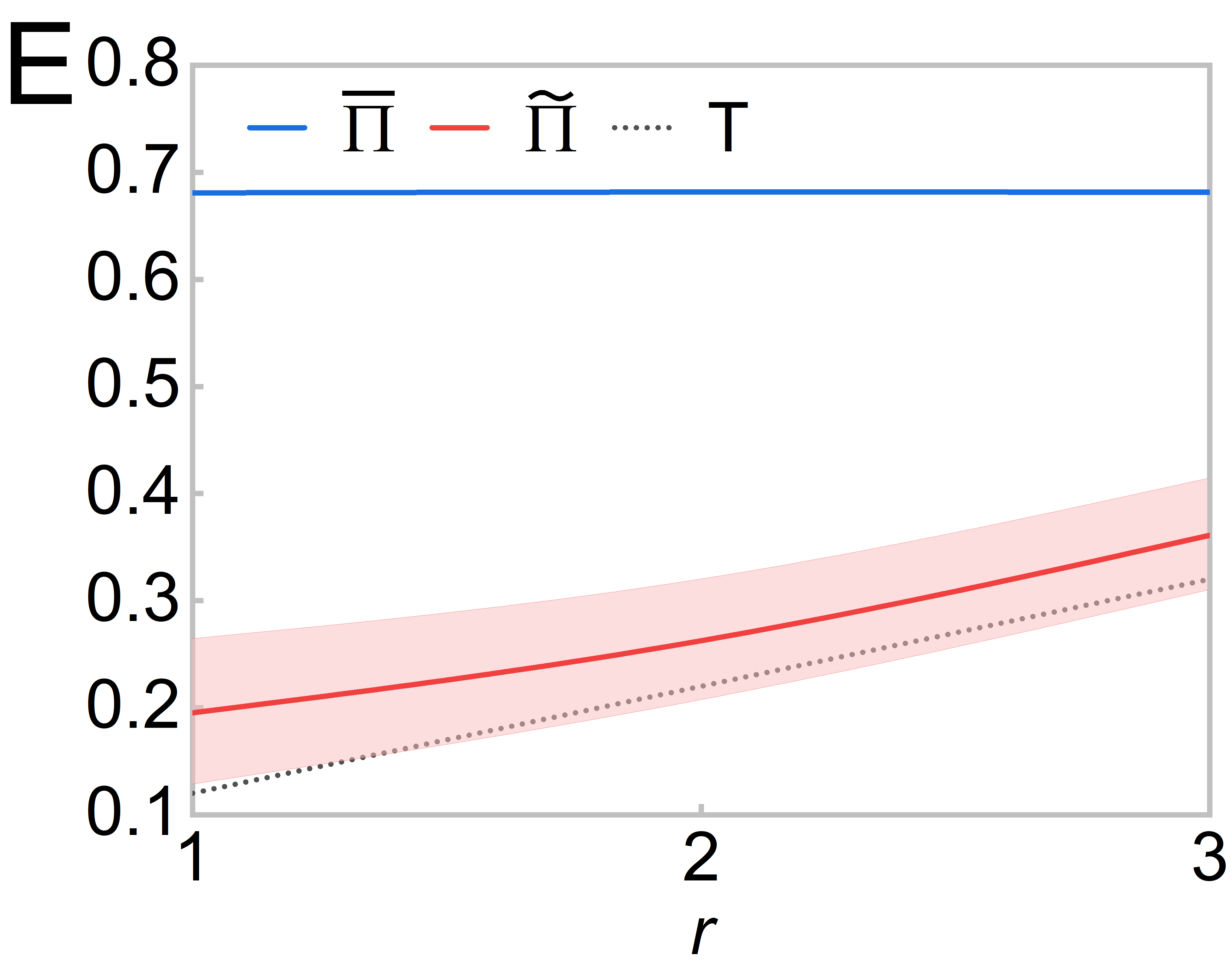}
	}
	\subfigure{
		\label{fig:side:f}
		\includegraphics[width=2.07in]{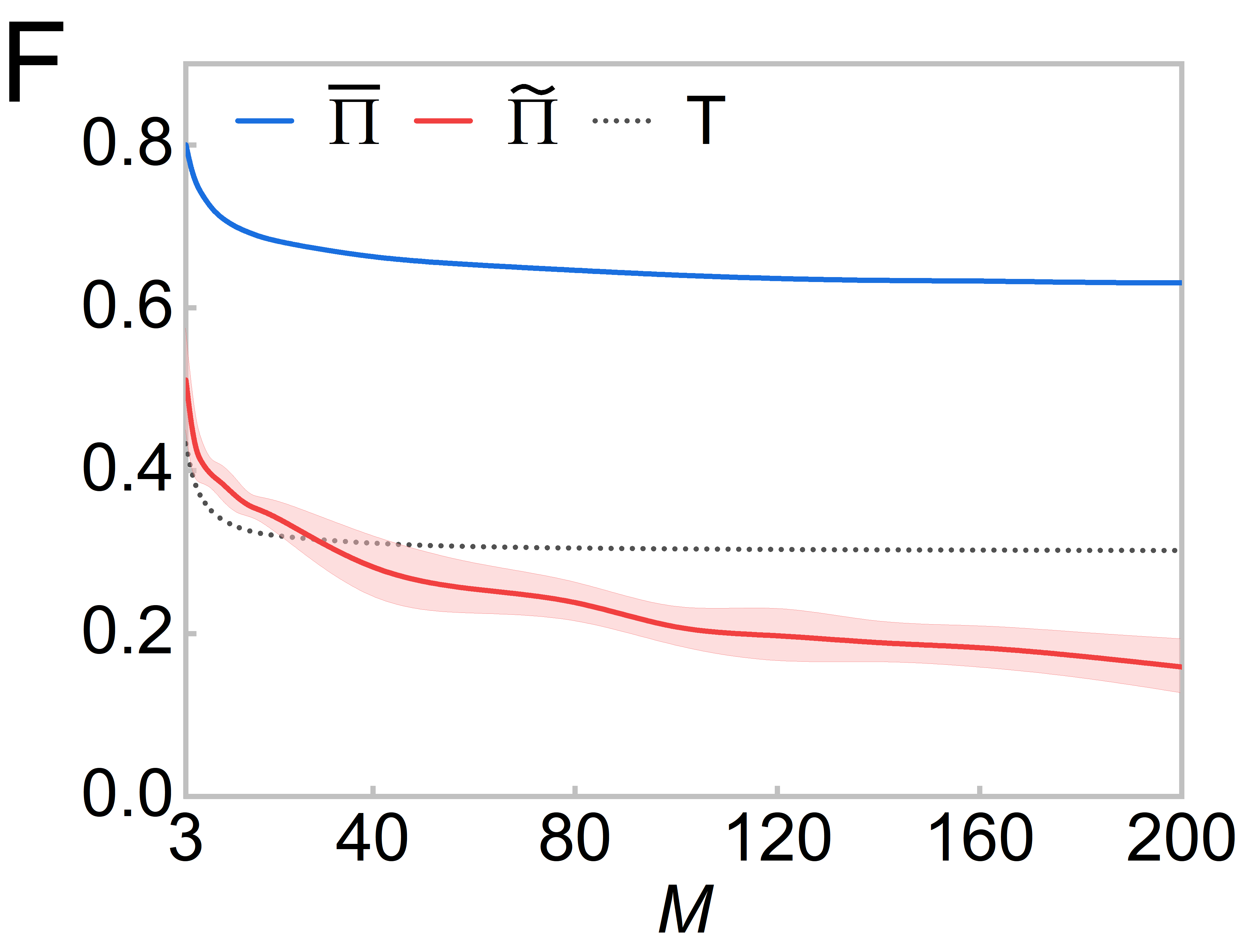}
	}
	\vspace{0pt}
	\caption{(A) How the estimated predictability $\bar{\Pi}$ by the entropy-based method changes with the increasing $n$ under the first series generator. (B) The performance of the entropy-based method (Eq. 9) and the BER-inspired method (Eq. 10) under the first generator. (C) How the estimated predictability $\bar{\Pi}$ by the entropy-based method changes with the increasing $n$ under the second series generator. (D) The performance of the entropy-based and BER-inspired methods under the second generator. For the BER-inspired method, $\tilde{\Pi}^{\textup{low}}$ and $\tilde{\Pi}^{\textup{up}}$ are the lower and upper bounds by Eq. (10), and  $\tilde{\Pi}=\frac{1}{2}\left( \tilde{\Pi}^{\textup{low}}+\tilde{\Pi}^{\textup{up}} \right)$ is the estimated predictability. In plots (B) and (D), the shadow areas indicate to what extent the BER-inspired method outperforms the entropy-based method. (E) The performance of the entropy-based and BER-inspired methods under the thrid generator with varying $r$, with $M=20$ fixed. (F) The performance of the entropy-based and BER-inspired methods under the thrid generator with varying $M$, with $r=3$ fixed. In plots (E) and (F), the shadow areas denote the standard errors. In all comparisons between the entropy-based and BER-inspired methods, the length of time series is fixed as $n=2^{15}$, and the corresponding results are averaged over 10 independent runs.}
	\label{fig:results}
\end{figure*}

\section*{Discussion}

The direct value of knowing predictability is to decide whether it is worthwhile to improve the current predictors \cite{song2010limits,lu2015toward}. The embodiment of such value requires an accurate estimate of predictability. Unfortunately, the entropy-based method \cite{song2010limits} usually fails as it largely overestimates the true predictability (see, for example, figure \ref{fig:results}). The dissatisfactory performance partially comes from the approximation that only accounts for the entropy of the state with the maximum next-moment occurrence probability. At the same time, such approximation is an indispensable part that guarantees the computational feasibility. Therefore, it is difficult to overcome the observed disadvantages within the entropic framework \cite{smith2014refined,zhang2022beyond}. This paper uncovers the equivalence between predictability and a seemingly unrelated metric BER, and immediately provides  
a novel way to improve the estimation of predictability -- applying the BER-inspired methods.

\section*{Acknowledgement}

This work was supported in part by the National Natural Science Foundation of China (No. 61960206008, No. 62002294, No. 11975071) and the National Science Fund for Distinguished Young Scholars (No. 61725205).

\bibliography{sample}

\begin{thebibliography}{10}
\urlstyle{rm}
\expandafter\ifx\csname url\endcsname\relax
  \def\url#1{\texttt{#1}}\fi
\expandafter\ifx\csname urlprefix\endcsname\relax\def\urlprefix{URL }\fi
\expandafter\ifx\csname doiprefix\endcsname\relax\def\doiprefix{DOI: }\fi
\providecommand{\bibinfo}[2]{#2}
\providecommand{\eprint}[2][]{\url{#2}}

\bibitem{song2010limits}
\bibinfo{author}{Song, C.}, \bibinfo{author}{Qu, Z.}, \bibinfo{author}{Blumm,
  N.} \& \bibinfo{author}{Barabási, A.-L.}
\newblock \bibinfo{journal}{\bibinfo{title}{Limits of predictability in human
  mobility}}.
\newblock {\emph{\JournalTitle{Science}}} \textbf{\bibinfo{volume}{327}},
  \bibinfo{pages}{1018--1021}, \doiprefix\url{10.1126/science.1177170}
  (\bibinfo{year}{2010}).
\newblock \eprint{https://www.science.org/doi/pdf/10.1126/science.1177170}.

\bibitem{lu2013approaching}
\bibinfo{author}{Lu, X.}, \bibinfo{author}{Wetter, E.},
  \bibinfo{author}{Bharti, N.}, \bibinfo{author}{Tatem, A.~J.} \&
  \bibinfo{author}{Bengtsson, L.}
\newblock \bibinfo{journal}{\bibinfo{title}{Approaching the limit of
  predictability in human mobility}}.
\newblock {\emph{\JournalTitle{Sci. Rep.}}} \textbf{\bibinfo{volume}{3}},
  \bibinfo{pages}{2923} (\bibinfo{year}{2013}).

\bibitem{lu2012predictability}
\bibinfo{author}{Lu, X.}, \bibinfo{author}{Bengtsson, L.} \&
  \bibinfo{author}{Holme, P.}
\newblock \bibinfo{journal}{\bibinfo{title}{Predictability of population
  displacement after the 2010 haiti earthquake}}.
\newblock {\emph{\JournalTitle{Proceedings of the National Academy of Sciences
  U.S.A.}}} \textbf{\bibinfo{volume}{109}}, \bibinfo{pages}{11576--11581},
  \doiprefix\url{10.1073/pnas.1203882109} (\bibinfo{year}{2012}).
\newblock \eprint{https://www.pnas.org/doi/pdf/10.1073/pnas.1203882109}.

\bibitem{Chen2022}
\bibinfo{author}{Chen, Z.} \emph{et~al.}
\newblock \bibinfo{journal}{\bibinfo{title}{Contrasting social and non-social
  sources of predictability in human mobility}}.
\newblock {\emph{\JournalTitle{Nature Communications}}}
  \textbf{\bibinfo{volume}{13}}, \bibinfo{pages}{1922},
  \doiprefix\url{10.1038/s41467-022-29592-y} (\bibinfo{year}{2022}).

\bibitem{Scarpino2019}
\bibinfo{author}{Scarpino, S.~V.} \& \bibinfo{author}{Petri, G.}
\newblock \bibinfo{journal}{\bibinfo{title}{On the predictability of infectious
  disease outbreaks}}.
\newblock {\emph{\JournalTitle{Nature Communications}}}
  \textbf{\bibinfo{volume}{10}}, \bibinfo{pages}{898},
  \doiprefix\url{10.1038/s41467-019-08616-0} (\bibinfo{year}{2019}).

\bibitem{feder1994relations}
\bibinfo{author}{Feder, M.} \& \bibinfo{author}{Merhav, N.}
\newblock \bibinfo{journal}{\bibinfo{title}{Relations between entropy and error
  probability}}.
\newblock {\emph{\JournalTitle{IEEE Transactions on Information Theory}}}
  \textbf{\bibinfo{volume}{40}}, \bibinfo{pages}{259--266}
  (\bibinfo{year}{1994}).

\bibitem{xu2019predictability}
\bibinfo{author}{Xu, P.}, \bibinfo{author}{Yin, L.}, \bibinfo{author}{Yue, Z.}
  \& \bibinfo{author}{Zhou, T.}
\newblock \bibinfo{journal}{\bibinfo{title}{On predictability of time series}}.
\newblock {\emph{\JournalTitle{Physica A}}} \textbf{\bibinfo{volume}{523}},
  \bibinfo{pages}{345--351},
  \doiprefix\url{https://doi.org/10.1016/j.physa.2019.02.006}
  (\bibinfo{year}{2019}).

\bibitem{smith2014refined}
\bibinfo{author}{Smith, G.}, \bibinfo{author}{Wieser, R.},
  \bibinfo{author}{Goulding, J.} \& \bibinfo{author}{Barrack, D.}
\newblock \bibinfo{title}{A refined limit on the predictability of human
  mobility}.
\newblock In \emph{\bibinfo{booktitle}{{IEEE} International Conference on
  Pervasive Computing and Communications, PerCom 2014, Budapest, Hungary, March
  24-28, 2014}}, \bibinfo{pages}{88--94},
  \doiprefix\url{10.1109/PerCom.2014.6813948} (\bibinfo{publisher}{{IEEE}
  Computer Society}, \bibinfo{year}{2014}).

\bibitem{kulkarni2019examining}
\bibinfo{author}{Kulkarni, V.}, \bibinfo{author}{Mahalunkar, A.},
  \bibinfo{author}{Garbinato, B.} \& \bibinfo{author}{Kelleher, J.~D.}
\newblock \bibinfo{journal}{\bibinfo{title}{Examining the limits of
  predictability of human mobility}}.
\newblock {\emph{\JournalTitle{Entropy}}} \textbf{\bibinfo{volume}{21}},
  \bibinfo{pages}{432}, \doiprefix\url{10.3390/e21040432}
  (\bibinfo{year}{2019}).

\bibitem{cover1967nearest}
\bibinfo{author}{Cover, T.~M.} \& \bibinfo{author}{Hart, P.~E.}
\newblock \bibinfo{journal}{\bibinfo{title}{Nearest neighbor pattern
  classification}}.
\newblock {\emph{\JournalTitle{{IEEE} Trans. Inf. Theory}}}
  \textbf{\bibinfo{volume}{13}}, \bibinfo{pages}{21--27},
  \doiprefix\url{10.1109/TIT.1967.1053964} (\bibinfo{year}{1967}).

\bibitem{berisha2015empirically}
\bibinfo{author}{Berisha, V.}, \bibinfo{author}{Wisler, A.},
  \bibinfo{author}{III, A. O.~H.} \& \bibinfo{author}{Spanias, A.}
\newblock \bibinfo{journal}{\bibinfo{title}{Empirically estimable
  classification bounds based on a nonparametric divergence measure}}.
\newblock {\emph{\JournalTitle{{IEEE} Trans. Signal Process.}}}
  \textbf{\bibinfo{volume}{64}}, \bibinfo{pages}{580--591},
  \doiprefix\url{10.1109/TSP.2015.2477805} (\bibinfo{year}{2016}).

\bibitem{ferguson1983bayesian}
\bibinfo{author}{Ferguson, T.~S.}
\newblock \bibinfo{title}{Bayesian density estimation by mixtures of normal
  distributions}.
\newblock In \bibinfo{editor}{Rizvi, M.~H.}, \bibinfo{editor}{Rustagi, J.~S.}
  \& \bibinfo{editor}{Siegmund, D.} (eds.) \emph{\bibinfo{booktitle}{Recent
  Advances in Statistics}}, \bibinfo{pages}{287--302},
  \doiprefix\url{https://doi.org/10.1016/B978-0-12-589320-6.50018-6}
  (\bibinfo{publisher}{Academic Press}, \bibinfo{year}{1983}).

\bibitem{nguyen2010estimating}
\bibinfo{author}{Nguyen, X.}, \bibinfo{author}{Wainwright, M.~J.} \&
  \bibinfo{author}{Jordan, M.~I.}
\newblock \bibinfo{journal}{\bibinfo{title}{Estimating divergence functionals
  and the likelihood ratio by convex risk minimization}}.
\newblock {\emph{\JournalTitle{{IEEE} Trans. Inf. Theory}}}
  \textbf{\bibinfo{volume}{56}}, \bibinfo{pages}{5847--5861},
  \doiprefix\url{10.1109/TIT.2010.2068870} (\bibinfo{year}{2010}).

\bibitem{fukunaga1975k}
\bibinfo{author}{Fukunaga, K.} \& \bibinfo{author}{Hostetler, L.~D.}
\newblock \bibinfo{journal}{\bibinfo{title}{k-nearest-neighbor bayes-risk
  estimation}}.
\newblock {\emph{\JournalTitle{{IEEE} Trans. Inf. Theory}}}
  \textbf{\bibinfo{volume}{21}}, \bibinfo{pages}{285--293},
  \doiprefix\url{10.1109/TIT.1975.1055373} (\bibinfo{year}{1975}).

\bibitem{wisler2016empirically}
\bibinfo{author}{Wisler, A.}, \bibinfo{author}{Berisha, V.},
  \bibinfo{author}{Wei, D.}, \bibinfo{author}{Ramamurthy, K.} \&
  \bibinfo{author}{Spanias, A.}
\newblock \bibinfo{title}{Empirically-estimable multi-class classification
  bounds}.
\newblock In \emph{\bibinfo{booktitle}{{IEEE} International Conference on
  Acoustics, Speech and Signal Processing}}, \bibinfo{pages}{2594--2598},
  \doiprefix\url{10.1109/ICASSP.2016.7472146} (\bibinfo{publisher}{{IEEE},
  Shanghai, China, 2016}, \bibinfo{year}{2016}).

\bibitem{renggli2021evaluating}
\bibinfo{author}{Renggli, C.}, \bibinfo{author}{Rimanic, L.},
  \bibinfo{author}{Hollenstein, N.} \& \bibinfo{author}{Zhang, C.}
\newblock \bibinfo{title}{Evaluating bayes error estimators on real-world
  datasets with feebee}.
\newblock In \bibinfo{editor}{Vanschoren, J.} \& \bibinfo{editor}{Yeung, S.}
  (eds.) \emph{\bibinfo{booktitle}{Proceedings of the Neural Information
  Processing Systems Track on Datasets and Benchmarks 1}}
  (\bibinfo{publisher}{Curran Associates Inc., virtual}, \bibinfo{year}{2021}).

\bibitem{lu2015toward}
\bibinfo{author}{L{\"u}, L.}, \bibinfo{author}{Pan, L.}, \bibinfo{author}{Zhou,
  T.}, \bibinfo{author}{Zhang, Y.-C.} \& \bibinfo{author}{Stanley, H.~E.}
\newblock \bibinfo{journal}{\bibinfo{title}{Toward link predictability of
  complex networks}}.
\newblock {\emph{\JournalTitle{Proceedings of the National Academy of Sciences
  U.S.A.}}} \textbf{\bibinfo{volume}{112}}, \bibinfo{pages}{2325--2330}
  (\bibinfo{year}{2015}).

\bibitem{zhang2022beyond}
\bibinfo{author}{Zhang, C.}, \bibinfo{author}{Zhao, K.} \&
  \bibinfo{author}{Chen, M.}
\newblock \bibinfo{journal}{\bibinfo{title}{Beyond the limits of predictability
  in human mobility prediction: context-transition predictability}}.
\newblock {\emph{\JournalTitle{IEEE Trans Knowl Data Eng}}}  (\bibinfo{year}{in
  press}).

\end{thebibliography}

\section*{Author contributions statement}

E.X., Z.Y., B.G., and L.Y. designed research; E.X. performed research; and E.X., Z.Y., B.G., and L.Y. wrote the paper.
Author contributions: E.X., T.Z., and Z.Y. designed research; E.X. and T.Z. performed research; E.X. and T.Z. proved the theorem; E.X., T.Z., Z.Y., Z.S., and B.G. analyzed results; E.X. and T.Z. wrote the paper; Z.Y., Z.S., and B.G. edited the paper.

\textsuperscript{1}E.X. and T.Z. contributed equally to this work.

\textsuperscript{2}To whom correspondence should be addressed. E-mail: zhutou@ustc.edu (T.Z.) or zhiwenyu@nwpu.edu.cn (Z.Y.)

\end{document}